\def\BibTeX{{\rm B\kern-.05em{\sc i\kern-.025em b}\kern-.08em
		T\kern-.1667em\lower.7ex\hbox{E}\kern-.125emX}}
\newcolumntype{C}{>{\centering\arraybackslash}X} 
\newcolumntype{b}{>{\hsize=2.3\hsize}X}
\theoremstyle{plain}
\newtheorem{theorem}{Theorem}
\newtheorem{lemma}{Lemma}
\newtheorem{corollary}{Corollary}
\theoremstyle{definition}
\newtheorem{dfn}{Definition}
\theoremstyle{remark}
\newtheorem{remark}{Remark}
\newcommand{\A}{\mathcal{A}}
\newcommand{\X}{\mathcal{X}}
\newcommand{\Y}{\mathcal{Y}}
\newcommand{\Z}{\mathcal{Z}}
\newcommand{\E}{\mathbb{E}}
\newcommand{\Pm}{\mathcal{P}}
\newcommand{\Q}{\mathcal{Q}}
\newcommand{\F}{\mathcal{F}}
\DeclareMathOperator*{\esssup}{ess\,sup}
\newcommand{\ml}[2]{\mathcal{L}\left(#1  \!\!  \to  \!\!   #2\right)} 
\begin{document}
\title{Robust Generalization via $\alpha-$Mutual Information
}

\author{
	\IEEEauthorblockN{Amedeo Roberto Esposito, Michael Gastpar}
	\IEEEauthorblockA{\textit{School of Computer and Communication Sciences} \\
		EPFL\\
		\{amedeo.esposito, michael.gastpar\}@epfl.ch}
	
	\and
	\IEEEauthorblockN{Ibrahim Issa}
	\IEEEauthorblockA{\textit{Electrical and Computer Engineering Department} \\
		American University of Beirut\\
		ii19@aub.edu.lb}
}

\maketitle

\begin{abstract}
	The aim of this work is to provide bounds connecting two probability measures of the same event using R\'enyi $\alpha$-Divergences and Sibson's $\alpha$-Mutual Information, a generalization of respectively the Kullback-Leibler Divergence and Shannon's Mutual Information. A particular case of interest can be found when the two probability measures considered are a joint distribution and the corresponding product of marginals (representing the statistically independent scenario). In this case a bound using Sibson's $\alpha-$Mutual Information is retrieved, extending a result involving Maximal Leakage to general alphabets. These results have broad applications, from bounding the generalization error of learning algorithms to the more general framework of adaptive data analysis, provided that the divergences and/or information measures used are amenable to such an analysis ({\it i.e.,} are robust to post-processing and compose adaptively). The generalization error bounds are derived with respect to high-probability events but a corresponding bound on expected generalization error is also retrieved. 
\end{abstract}

\begin{IEEEkeywords}
	R\'enyi-Divergence, Sibson's Mutual Information, Maximal Leakage, Adaptive Data Analysis
\end{IEEEkeywords}

\section{Introduction}
Let us consider two probability spaces $(\Omega,\F,\Pm),(\Omega,\F,\Q)$ and let $E\in \F$ be a measurable event. Given some divergence between the two distributions $\hat{D}(\Pm,\Q)$  (e.g., KL, R\'enyi's $\alpha-$Divergence, ...) our aim is to provide bounds of the following shape: 
\begin{equation}\Pm(E) \leq f(\Q(E))\cdot g(\hat{D}(\Pm,\Q)),\label{generalBound}\end{equation} for some functions $f,g$. $E$ represents some \enquote{undesirable} event 
(e.g., large generalization error), whose measure under $\Q$ is known and whose measure under $\Pm$ we wish to bound. To that end, we use some notion of ``distance'' between $\Pm$ and $\Q$.  
Of particular interest is the case where $\Omega = \X\times\Y$,  $\Pm = \Pm_{XY}$ (the joint distribution), and $\Q=\Pm_X\Pm_Y$ (product of the marginals). This allows us to bound the likelihood of $E \subseteq \X \times \Y$ when two random variables $X$ and $Y$ are dependent as a function of the likelihood of $E$ when $X$ and $Y$ are independent (typically easier to analyze). 
Indeed, an immediate application can be found in bounding the generalization error of a learning algorithm and, when the proper measure is chosen, in adaptive data analysis. In order to be used in adaptive data analysis, such measure needs to be robust to post-processing and to compose adaptively (meaning that we can bound the measure between input and output of the composition of a sequence of algorithms if each of them has bounded measure). Results of this form involving mutual information can be found in \cite{learningMI,infoThGenAn,explBiasMI}. More recently, a different measure satisfying these properties, maximal leakage \cite{leakage},  has been used in \cite{ISIT2019,ITW2019}. More specifically, it was shown that Equation~\eqref{generalBound} holds
for the following choice of $f(\Pm_X\Pm_Y(E))=\max_y(\Pm_X(E_y))$ and $g(\hat{D}(\Pm_{XY}||\Pm_X\Pm_Y)) = \exp(\ml{X}{Y}) = \mathbb{E}_Y\left(D_\infty(\Pm_{X|Y}||\Pm_X)\right)= I_\infty(X;Y)$, where $I_\infty(X;Y)$ is the Sibson mutual information of order infinity. 
In this work, we derive a general bound in the form of~\eqref{generalBound} and focus on two interesting special cases. In particular, one specialization of the bound leads to a family of bounds in terms of $\alpha$-divergences. The other specialization leads to a family of bounds in terms of Sibson's $\alpha$-mutual information, thus generalizing the previous maximal leakage bound (which corresponds to $\alpha \to \infty$).
\section{Background And Definitions}

\subsection{Sibson's $\alpha-$Mutual Information}
Introduced by R\'enyi as a generalization of entropy and KL-divergence, $\alpha$-divergence has found many applications ranging from hypothesis testing to guessing and several other statistical inference problems~\cite{verduAlpha}. Indeed, it has several useful operational interpretations (e.g., the number of bits by which a mixture of two codes can be compressed, the cut-off rate in block coding and hypothesis testing \cite{alphaDiv,opMeanRDiv1}\cite[p. 649]{opMeanRDiv2}). It can be defined as follows~\cite{alphaDiv}.
\begin{dfn}
	Let $(\Omega,\F,\Pm),(\Omega,\F,\Q)$ be two probability spaces. Let $\alpha>0$ be a positive real number different from $1$. Consider a measure $\mu$ such that $\Pm\ll\mu$ and $\Q\ll\mu$ (such a measure always exists, e.g. $\mu=(\Pm+\Q)/2$)) and denote with $p,q$ the densities of $\Pm,\Q$ with respect to $\mu$. The $\alpha-$Divergence of $\Pm$ from $\Q$ is defined as follows:
	\begin{align}
	D_\alpha(\Pm\|\Q)=\frac{1}{\alpha-1} \ln \int p^\alpha q^{1-\alpha} d\mu.
	\end{align}
\end{dfn}
\begin{remark}
	The definition is independent of the chosen measure $\mu$. 
	It is indeed possible to show that
	$\int p^{\alpha}q^{1-\alpha} d\mu = \int \left(\frac{q}{p}\right)^{1-\alpha}d\Pm $, and that whenever $\Pm\ll\Q$ or $0<\alpha<1,$ we have $\int p^{\alpha}q^{1-\alpha} d\mu= \int \left(\frac{p}{q}\right)^{\alpha}d\Q$, see \cite{alphaDiv}.
\end{remark}

It can be shown that if $\alpha>1$ and $\Pm\not\ll\Q$ then $D_\alpha(\Pm\|\Q)=\infty$. The behaviour of the measure for $\alpha\in\{0,1,\infty\}$ can be defined by continuity. In general, one has that $D_1(\Pm\|\Q) = D(\Pm\|\Q)$ but if $D(\Pm\|\Q)=\infty$ or there exists $\beta$ such that $D_\beta(\Pm\|\Q)<\infty$ then $\lim_{\alpha\downarrow1}D_\alpha(\Pm\|\|Q)=D(\Pm\|\Q)$\cite[Theorem 5]{alphaDiv}. For an extensive treatment of $\alpha$-divergences and their properties we refer the reader to~\cite{alphaDiv}. 
Starting from the concept of $\alpha-$divergence, Sibson built a generalization of mutual information that retains many interesting properties. The definition is the following \cite{verduAlpha}:
\begin{dfn}
	Let $X$ and $Y$ be two random variables jointly distributed according to $\Pm_{XY}$, and with marginal distributions $\Pm_X$ and $\Pm_Y$, respectively. For $\alpha>0$, the Sibson's mutual information of order $\alpha$ between $X$ and $Y$ is defined as:
	\begin{align}
	I_\alpha(X;Y) = \min_{Q_Y} D_\alpha(\Pm_{XY}\|\Pm_X Q_Y) \label{iAlphaDef}.
	\end{align}
\end{dfn}
Moreover, $\lim_{\alpha\to 1}I_\alpha(X;Y)=I(X;Y)$. On the other hand when $\alpha\to\infty$, we get: $$I_\infty(X;Y)=\log\mathbb{E}_{\Pm_Y}\left[\sup_{x:\Pm_X(x)>0} \frac{\Pm_{XY}(\{x,Y\})}{\Pm_X(\{x\})\Pm_Y(\{Y\})}\right].$$
For more details on Sibson's $\alpha$-MI we refer the reader to~\cite{verduAlpha}.

\subsection{Learning Theory}
In this section, we provide some basic background knowledge on learning algorithms and concepts like generalization error. We are mainly interested in supervised learning, where the algorithm learns a \emph{classifier} by looking at points in a proper space and the corresponding labels. 

More formally, suppose we have an instance space $\mathcal{Z}$ and a hypothesis space $\mathcal{H}$. The hypothesis space is a set of functions that, given a data-point $s\in \mathcal{Z}$ give as an output the corresponding label $\mathcal{Y}$. Suppose we are given a training data set $\mathcal{Z}^n\ni S =\{z_1,\ldots,z_n\}$ made of $n$ points sampled in an i.i.d fashion from some distribution $\mathcal{P}$. Given some $n\in\mathbb{N}$, a learning algorithm is a (possibly stochastic) mapping $\mathcal{A}:\mathcal{Z}^n\to \mathcal{H}$ that given as an input a finite sequence of points $S\in\mathcal{Z}^n$ outputs some classifier $h=\mathcal{A}(S)\in\mathcal{H}$.
In the simplest setting we can think of $\mathcal{Z}$ as a product between the space of data-points and the space of labels, {\it i.e.,} $\mathcal{Z}=\mathcal{X}\times\mathcal{Y}$ and suppose that $\mathcal{A}$ is fed with $n$ data-label pairs $(x,y)\in\mathcal{Z}$. In this work we will view $\mathcal{A}$ as a family of conditional distributions $\mathcal{P}_{H|S}$ and provide a stochastic analysis of its generalization capabilities using the information measures introduced above.
The goal 
is to generate a hypothesis $h:\mathcal{X}\to \mathcal{Y}$ that has good performance on both the training set and newly sampled points from $\mathcal{X}$. In order to ensure such property the concept of generalization error is introduced.

\begin{dfn} Let $\mathcal{P}$ be some distribution over $\mathcal{Z}$. Let $\ell:\mathcal{H}\times\mathcal{Z}\to\mathbb{R}$ be a loss function. The error (or risk) of a prediction rule $h$ with respect to $\mathcal{P}$ is defined as \begin{equation}L _\mathcal{P}(h)=\mathbb{E}_{Z\sim \mathcal{P}}[\ell(h,Z)], \end{equation}
	while, given a sample $S=(z_1,\ldots,z_n)$, 
	the empirical error
	of $h$ with respect to $S$ is defined as \begin{equation}\label{genEmpRisk}L_{S}(h) = \frac1n \sum_{i=1}^n \ell(h, z_i).\end{equation}
	Moreover, given a learning algorithm $\mathcal{A}:\mathcal{Z}^n\to\mathcal{H}$, its generalization error with respect to $S$ is defined as: \begin{equation}\label{generr}\text{gen-err}_\mathcal{P}(\mathcal{A},S)=|L_{\mathcal{P}}(\mathcal{A}(S))-L_{S}(\mathcal{A}(S))|.\end{equation}
\end{dfn}
The definition above considers general loss functions. An important instance for the case of supervised learning is the $0-1$ loss. Suppose again that $\Z=\X\times\Y$ and that $\mathcal{H}=\{h|h:\X\to\Y\}$; given a pair $(x,y)\in\Z$ and a hypothesis $h:\X\to\Y$ the loss is defined as follows: \begin{equation}\ell(h,(x,y))=\mathbbm{1}_{h(x)\neq y}, \label{01loss}\end{equation} where $\mathbbm{1}$ is the indicator function. The corresponding errors become: \begin{equation}L_\mathcal{P}(h)= \E_{(x,y)\sim\mathcal{P}}[\mathbbm{1}_{h(x)\neq y}] = \mathcal{P}(\{(x,y): h(x)\neq y
\} )\end{equation} and \begin{equation}L_S(h)= \frac1n \sum_{i=1}^n \mathbbm{1}_{h(x_i)\neq y_i}.\label{empRisk}\end{equation}

Another fundamental concept we will need is the sample complexity of a learning algorithm. 
\begin{dfn}
	Fix $\epsilon, \delta \in (0,1)$. Let $\mathcal{H}$ be a hypothesis class. The sample complexity of $\mathcal{H}$ with respect to $(\epsilon,\delta)$, denoted by $m_\mathcal{H}(\epsilon,\delta)$,
	is defined as the smallest $m \in \mathbb{N}$ for which there exists a learning algorithm $\mathcal{A}$ such that, for every distribution $\mathcal{P}$ over the domain $\mathcal{X}$ we have that $\mathbb{P}(\text{gen-err}_\mathcal{P}(\mathcal{A},S)>\epsilon)\leq \delta.$
	If there is no such $m$ then $m_\mathcal{H}(\epsilon,\delta)=\infty$.
\end{dfn}
For more details we refer the reader to \cite{learningBook}.
\section{Main Results}
Our main theorem is a general bound on $\Pm_{XY} (E)$ in terms of $\Pm_X \Pm_Y(E)$, parameterized by two real numbers $\alpha$ and $\alpha'$. For particular choices of $\alpha$ and $\alpha'$, we demonstrate bounds in terms of $\alpha$-divergence, as well as $\alpha$-mutual information. The latter is a generalization of the maximal leakage bound in~\cite{ITW2019}.
\begin{theorem}\label{alphaExpBound}
	Let $(\X\times\Y,\F,\Pm_{XY}),(\X\times\Y,\F,\Pm_X\Pm_Y)$ be two probability spaces, and assume that $\Pm_{XY}\ll\Pm_X\Pm_Y$. Given $E\in\F$, let  $E_y := \{x : (x,y)\in E\}$, {\it i.e.,} the \enquote{fibers} of $E$ with respect to $y$. Then for any $E \in \F$,
	\begin{equation}
	\begin{split}
	\Pm_{XY}(E) 
	\leq &\left(\mathbb{E}_{\Pm_Y}\left[\Pm_X(E_Y)^{\gamma'/\gamma}\right]\right)^{1/\gamma'}\cdot\\&\left(\mathbb{E}_{\Pm_Y} \left[\E_{\Pm_X}^{\alpha'/\alpha}\left[ \left(\frac{dP_{XY}}{d\Pm_X\Pm_Y}\right)^\alpha \right]\right]\right)^{1/\alpha'}, \label{genBoundAlpha}
	\end{split}
	\end{equation}
	where $\gamma,\alpha,\gamma',\alpha'$ are such that $1=\frac1\alpha+\frac1\gamma = \frac{1}{\alpha'}+\frac{1}{\gamma'}$, and $\alpha,\gamma,\alpha',\gamma'\geq 1$.
\end{theorem}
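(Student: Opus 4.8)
The plan is to obtain~\eqref{genBoundAlpha} from two successive applications of Hölder's inequality, separated by a Tonelli-type disintegration that exploits the product structure of $\Pm_X\Pm_Y$. Since $\Pm_{XY}\ll\Pm_X\Pm_Y$, the Radon--Nikodym derivative $f:=\frac{d\Pm_{XY}}{d\Pm_X\Pm_Y}$ exists and is nonnegative, so
\[
\Pm_{XY}(E)=\int_E f\,d(\Pm_X\Pm_Y).
\]
Because $\Pm_X\Pm_Y$ is a product measure, $E$ lies in the product $\sigma$-algebra, and the integrand is nonnegative, Tonelli's theorem applies and lets us rewrite this as an iterated integral over the fibers:
\[
\Pm_{XY}(E)=\int_\Y\left(\int_{E_y} f(x,y)\,d\Pm_X(x)\right)d\Pm_Y(y).
\]

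Next, I would fix $y$ and bound the inner integral. Writing $\int_{E_y} f(x,y)\,d\Pm_X(x)=\int_\X \mathbbm{1}_{E_y}(x)\,f(x,y)\,d\Pm_X(x)$ and applying Hölder's inequality on $(\X,\Pm_X)$ with conjugate exponents $\gamma$ and $\alpha$ (so that $\tfrac1\gamma+\tfrac1\alpha=1$), together with the identity $\mathbbm{1}_{E_y}^{\gamma}=\mathbbm{1}_{E_y}$, gives
\[
\int_{E_y} f(x,y)\,d\Pm_X(x)\le \Pm_X(E_y)^{1/\gamma}\left(\E_{\Pm_X}\!\left[f(\cdot,y)^{\alpha}\right]\right)^{1/\alpha}.
\]
Substituting back yields $\Pm_{XY}(E)\le \E_{\Pm_Y}\!\big[\,\Pm_X(E_Y)^{1/\gamma}\,(\E_{\Pm_X}[f(\cdot,Y)^{\alpha}])^{1/\alpha}\,\big]$. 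Now I apply Hölder's inequality a second time, on $(\Y,\Pm_Y)$, with conjugate exponents $\gamma'$ and $\alpha'$, treating $\Pm_X(E_Y)^{1/\gamma}$ as the first factor and $(\E_{\Pm_X}[f(\cdot,Y)^{\alpha}])^{1/\alpha}$ as the second. Since $(\Pm_X(E_Y)^{1/\gamma})^{\gamma'}=\Pm_X(E_Y)^{\gamma'/\gamma}$ and $((\E_{\Pm_X}[f(\cdot,Y)^{\alpha}])^{1/\alpha})^{\alpha'}=\E_{\Pm_X}^{\alpha'/\alpha}[f(\cdot,Y)^{\alpha}]$, this produces exactly the right-hand side of~\eqref{genBoundAlpha}.

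I do not anticipate a genuine obstacle here, as the argument is a careful double Hölder; the points deserving attention are essentially bookkeeping. One must check that all four quantities form admissible conjugate-exponent pairs, which is guaranteed by $\alpha,\gamma,\alpha',\gamma'\ge 1$ together with the two reciprocal identities. One must also treat the degenerate cases $\gamma=1$ (equivalently $\alpha=\infty$) and $\gamma'=1$, in which the relevant $L^{\alpha}$- or $L^{\alpha'}$-norm is to be read as an $\esssup$ and Hölder collapses to the $L^1$--$L^\infty$ bound. Finally, the inequality is vacuously true whenever the second factor is infinite, so no integrability assumption beyond $\Pm_{XY}\ll\Pm_X\Pm_Y$ is required.
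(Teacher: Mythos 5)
Your proposal is correct and follows essentially the same route as the paper: disintegrate $\Pm_{XY}(E)=\E_{\Pm_X\Pm_Y}[\mathbbm{1}_E\,d\Pm_{XY}/d(\Pm_X\Pm_Y)]$ into an iterated integral, apply H\"older on $(\X,\Pm_X)$ with exponents $(\gamma,\alpha)$ using $\mathbbm{1}_{E_y}^\gamma=\mathbbm{1}_{E_y}$, then apply H\"older again on $(\Y,\Pm_Y)$ with exponents $(\gamma',\alpha')$. Your explicit attention to Tonelli and the degenerate cases $\gamma=1$ or $\gamma'=1$ is slightly more careful than the paper's write-up but does not change the argument.
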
 
\begin{proof}
	We have that:
	\begin{align}
	\Pm_{XY}(E)&= \E_{\Pm_{XY}}[\mathbbm{1}_E]\\
	&= \E_{\Pm_X\Pm_Y}\left[\mathbbm{1}_E \frac{dP_{XY}}{d\Pm_X\Pm_Y}\right] \\
	&= \E_{\Pm_Y}\left[\E_{\Pm_X} \left[\mathbbm{1}_{\{X\in E_Y\}}\frac{dP_{XY}}{d\Pm_X\Pm_Y}\right]  \right] \\ \begin{split}
	&\leq \E_{\Pm_Y}  \!\!  \bigg[\left(\E_{\Pm_X} \left[\mathbbm{1}_{\{X\in E_Y\}}^\gamma\right]\right)^{1/\gamma} \cdot \\ & \qquad \left(\E_{\Pm_X} \left[\left(\frac{dP_{XY}}{d\Pm_X\Pm_Y}\right)^\alpha\right]\right)^{1/\alpha}  \bigg]\label{holder} \end{split} \\
	&=\E_{\Pm_Y}\left[\Pm_X(E_Y)^{1/\gamma}\left(\E_{\Pm_X}  \left[\left(\frac{dP_{XY}}{d\Pm_X\Pm_Y}\right)^\alpha\right]\right)^{1/\alpha} \right] \\
	\begin{split}&\leq
	\left(\mathbb{E}_{\Pm_Y}\left[\Pm_X(E_Y)^{\gamma'/\gamma}\right]\right)^{1/\gamma'}\cdot \\ 
	& \qquad \left(\mathbb{E}_{\Pm_Y}\left[\E_{\Pm_X}^{\alpha'/\alpha}\left[ \left(\frac{dP_{XY}}{d\Pm_X\Pm_Y}\right)^\alpha \right]\right]\right)^{1/\alpha'}, \label{holder2}
	\end{split}
	\end{align}
	where~\eqref{holder} and \eqref{holder2} follow from Holder's inequality, given that $\gamma,\alpha, \gamma',\alpha'\geq 1$ and $\frac1\gamma + \frac1\alpha = \frac1{\gamma'} + \frac1{\alpha'}=1$.
\end{proof}

\begin{remark} It is clear from the proof that one can similarly bound $\mathbb{E}[g(X,Y)]$ for any positive function $g(X,Y)$ such that $g(X,Y)$ is $\Pm_X\Pm_Y$-integrable. But the shape of the bound becomes more complex as one in general does not have that $g(X,Y)^\gamma=g(X,Y)$ for every $\gamma\geq 1$. 
\end{remark}
Based on the choices of $\alpha,\alpha'$, one can derive different bounds. Two are of particular interests to us and rely on different choices of $\alpha'$. Choosing $\alpha'=\alpha$ and thus $\gamma'=\gamma$ in Theorem 1, we retrieve:
\begin{corollary}\label{alphaDivBound}
	Let $(\X\times\Y,\F,\Pm_{XY}),(\X\times\Y,\F,\Pm_X\Pm_Y)$ be two probability spaces, and assume that $\Pm_{XY}\ll\Pm_X\Pm_Y$. Let $E\in\F$ we have that:
	\begin{align}
	P_{XY}(E)\leq &(\Pm_X\Pm_Y(E))^{\frac{\alpha-1}{\alpha}}\cdot\notag\\&\exp\left(\frac{\alpha-1}{\alpha}D_\alpha(\Pm_{XY}\|\Pm_X\Pm_Y)\right). 
	\end{align}
\end{corollary}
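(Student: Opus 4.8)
The plan is to specialize Theorem~\ref{alphaExpBound} to the symmetric choice $\alpha' = \alpha$, which forces $\gamma' = \gamma$ through the conjugacy constraint $1 = 1/\alpha' + 1/\gamma'$, and then to recognize the two resulting factors. First I would substitute $\alpha' = \alpha$ and $\gamma' = \gamma$ into~\eqref{genBoundAlpha}. The exponent $\gamma'/\gamma$ appearing inside the first expectation becomes $1$, so that factor collapses to $\left(\E_{\Pm_Y}[\Pm_X(E_Y)]\right)^{1/\gamma}$; and since $\E_{\Pm_Y}[\Pm_X(E_Y)] = \Pm_X\Pm_Y(E)$ (this is Fubini--Tonelli applied to the indicator of $E$, with $E_Y$ the fibers), the first factor is $(\Pm_X\Pm_Y(E))^{1/\gamma}$. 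Using $1/\gamma = 1 - 1/\alpha = (\alpha-1)/\alpha$ gives the claimed $(\Pm_X\Pm_Y(E))^{(\alpha-1)/\alpha}$.

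Next I would treat the second factor. With $\alpha' = \alpha$ the exponent $\alpha'/\alpha$ inside is $1$, so the nested expectation $\E_{\Pm_Y}\big[\E_{\Pm_X}[(d\Pm_{XY}/d\Pm_X\Pm_Y)^\alpha]\big]$ collapses to the single expectation $\E_{\Pm_X\Pm_Y}[(d\Pm_{XY}/d\Pm_X\Pm_Y)^\alpha]$. To identify this with an $\alpha$-divergence, take the dominating measure in the definition of $D_\alpha$ to be $\mu = \Pm_X\Pm_Y$ (legitimate since $\Pm_{XY} \ll \Pm_X\Pm_Y$ by hypothesis, and trivially $\Pm_X\Pm_Y \ll \Pm_X\Pm_Y$). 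Then $q \equiv 1$ and $p = d\Pm_{XY}/d\Pm_X\Pm_Y$, so $\int p^\alpha q^{1-\alpha}\, d\mu = \E_{\Pm_X\Pm_Y}[(d\Pm_{XY}/d\Pm_X\Pm_Y)^\alpha]$, which by the definition of $D_\alpha$ equals $\exp\big((\alpha-1)\, D_\alpha(\Pm_{XY}\|\Pm_X\Pm_Y)\big)$. Raising to the power $1/\alpha$ yields the factor $\exp\big(\tfrac{\alpha-1}{\alpha} D_\alpha(\Pm_{XY}\|\Pm_X\Pm_Y)\big)$. Multiplying the two factors gives the statement.

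There is essentially no hard step: the corollary is a direct algebraic specialization of Theorem~\ref{alphaExpBound}, combined with the alternative form of $\int p^\alpha q^{1-\alpha}\,d\mu$ recorded in the Remark following the definition of $\alpha$-divergence. The only points worth checking are (i) that the conjugacy constraints of Theorem~\ref{alphaExpBound} are inherited --- $\alpha = \alpha' \ge 1$ and $\gamma = \gamma' \ge 1$ remains an admissible choice --- and (ii) the degenerate case $\Pm_X\Pm_Y(E) = 0$, for which $\Pm_{XY}(E) = 0$ as well by absolute continuity, so the inequality holds trivially with the convention $0^{(\alpha-1)/\alpha} = 0$.
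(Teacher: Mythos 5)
Your proof is correct and follows exactly the route the paper intends: the paper derives this corollary by setting $\alpha'=\alpha$ (hence $\gamma'=\gamma$) in Theorem~\ref{alphaExpBound}, and your identification of the first factor as $(\Pm_X\Pm_Y(E))^{(\alpha-1)/\alpha}$ via Fubini and of the second as $\exp\bigl(\tfrac{\alpha-1}{\alpha}D_\alpha(\Pm_{XY}\|\Pm_X\Pm_Y)\bigr)$ via the definition of $D_\alpha$ with dominating measure $\Pm_X\Pm_Y$ supplies precisely the details the paper leaves implicit.
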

Choosing $\alpha'\to1$, which implies $\gamma'\to+\infty$, we retrieve:
\begin{corollary}\label{sibsMIBoundCor}
	Let $(\X\times\Y,\F,\Pm_{XY}),(\X\times\Y,\F,\Pm_X\Pm_Y)$ be two probability spaces, and assume that $\Pm_{XY}\ll\Pm_X\Pm_Y$. Given $E\in\F$, we have that:
	\begin{align}
	P_{XY}(E)\leq \!\! &\left(\! \esssup_{\Pm_Y} \Pm_X(E_Y)\right)^{1/\gamma}\cdot \\& \mathbb{E}_{\Pm_Y} \!\! \left[\E^{1/\alpha}_{\Pm_X}\left[ \left(\frac{dP_{XY}}{d\Pm_Y\Pm_X}\right)^\alpha\right]\right] \label{sibsNonVerdu}\\ = &\left(\esssup_{\Pm_Y} \Pm_X(E_Y)\right)^{\frac{\alpha-1}{\alpha}} \exp\left(\frac{\alpha-1}{\alpha} I_{\alpha}(X;Y)\right), \label{sibMIBound}
	\end{align}
	where $I_\alpha(X;Y)$ is the Sibson's mutual information of order $\alpha$ 
	\cite{verduAlpha}.
\end{corollary}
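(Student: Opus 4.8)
The plan is to obtain Corollary~\ref{sibsMIBoundCor} from the general estimate~\eqref{genBoundAlpha} of Theorem~\ref{alphaExpBound} by letting $\alpha'\downarrow 1$ (equivalently $\gamma'\to+\infty$) with $\alpha$, and hence $\gamma$ (fixed by $1/\alpha+1/\gamma=1$), held constant. Since the left-hand side $\Pm_{XY}(E)$ does not depend on $\alpha'$, it is bounded by the infimum — in particular the limit — of the right-hand side over admissible $\alpha'$, so it suffices to evaluate that limit.

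For the first factor, set $t:=\gamma'/\gamma\to+\infty$; then $\left(\E_{\Pm_Y}\left[\Pm_X(E_Y)^{\gamma'/\gamma}\right]\right)^{1/\gamma'}=\norm{\Pm_X(E_\cdot)}_{L^t(\Pm_Y)}^{1/\gamma}$, and on a probability space $\norm{\cdot}_{L^t}$ is non-decreasing in $t$ and converges to $\norm{\cdot}_{L^\infty}$; as $0\le\Pm_X(E_Y)\le 1$ this limit is finite and equals $\esssup_{\Pm_Y}\Pm_X(E_Y)$, giving the factor $\left(\esssup_{\Pm_Y}\Pm_X(E_Y)\right)^{1/\gamma}$. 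For the second factor, the exponents $1/\alpha'$ and $\alpha'/\alpha$ are continuous in $\alpha'$ and tend to $1$ and $1/\alpha$, so it converges to $\E_{\Pm_Y}\!\left[\E_{\Pm_X}^{1/\alpha}\!\left[\left(\tfrac{dP_{XY}}{d\Pm_X\Pm_Y}\right)^{\alpha}\right]\right]$ (if this is $+\infty$ the inequality is vacuous; otherwise only the elementary $L^t\to L^\infty$ limit above is needed, with no further interchange). This yields~\eqref{sibsNonVerdu}. A more direct route avoids limits altogether: in the proof of Theorem~\ref{alphaExpBound} stop one line before~\eqref{holder2}, bound $\Pm_X(E_Y)^{1/\gamma}\le\left(\esssup_{\Pm_Y}\Pm_X(E_Y)\right)^{1/\gamma}$ $\Pm_Y$-a.s., and pull the constant out of $\E_{\Pm_Y}$.

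It remains to prove the equality in~\eqref{sibMIBound}, i.e.\ $\E_{\Pm_Y}\!\left[\E_{\Pm_X}^{1/\alpha}\!\left[r(X,Y)^{\alpha}\right]\right]=\exp\!\left(\tfrac{\alpha-1}{\alpha}I_\alpha(X;Y)\right)$ with $r:=\tfrac{dP_{XY}}{d\Pm_X\Pm_Y}$, together with $1/\gamma=(\alpha-1)/\alpha$. I would derive the closed form of Sibson's $\alpha$-MI directly from~\eqref{iAlphaDef}: writing $Q_Y=u\,\Pm_Y$ with $u\ge0$, $\E_{\Pm_Y}[u]=1$, one gets $D_\alpha(\Pm_{XY}\|\Pm_X Q_Y)=\tfrac{1}{\alpha-1}\ln\E_{\Pm_Y}\!\left[u^{1-\alpha}G(Y)\right]$, where $G(y):=\E_{\Pm_X}[r(X,y)^{\alpha}]$. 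Applying Hölder's inequality with exponents $\alpha$ and $\gamma=\tfrac{\alpha}{\alpha-1}$ to $G^{1/\alpha}=(u^{1-\alpha}G)^{1/\alpha}\,u^{(\alpha-1)/\alpha}$ gives $\E_{\Pm_Y}[G^{1/\alpha}]\le\E_{\Pm_Y}[u^{1-\alpha}G]^{1/\alpha}$, with equality for $u\propto G^{1/\alpha}$; hence $\min_{Q_Y}\E_{\Pm_Y}[u^{1-\alpha}G]=\E_{\Pm_Y}[G^{1/\alpha}]^{\alpha}$ and $I_\alpha(X;Y)=\tfrac{\alpha}{\alpha-1}\ln\E_{\Pm_Y}[G^{1/\alpha}]$, which is exactly the claimed identity. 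This formula also appears in~\cite{verduAlpha}, so one may alternatively just cite it.

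The $\alpha'\downarrow1$ limits and the exponent bookkeeping are routine; the only substantive point is the variational identity for $I_\alpha$. If one insists on a self-contained derivation, the main obstacle is checking that the Hölder step is tight — that $u\propto G^{1/\alpha}$ is an admissible, normalisable density attaining equality (a minor issue only when $G$ is unbounded or vanishes on a set of positive $\Pm_Y$-measure, and note $\E_{\Pm_X\Pm_Y}[r]=1$ rules out $G\equiv0$); invoking~\cite{verduAlpha} removes this concern entirely.
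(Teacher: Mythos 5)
Your proposal is correct and follows essentially the same route as the paper: the paper obtains the corollary precisely by taking $\alpha'\to 1$ (hence $\gamma'\to\infty$) in Theorem~\ref{alphaExpBound}, with the first factor converging to the essential supremum and the second factor identified with $\exp\left(\frac{\alpha-1}{\alpha}I_\alpha(X;Y)\right)$ via the known closed form of Sibson's mutual information cited from~\cite{verduAlpha}. Your additional material — the direct a.s.\ bound avoiding the limit, and the Hölder-based derivation of the variational identity for $I_\alpha$ — is a correct filling-in of details the paper leaves to the reference, not a different approach.
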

\begin{remark}
	An in-depth study of $\alpha-$mutual information appears in~\cite{verduAlpha}, where a slightly different notation is used. For reference, we can restate Eq. \eqref{sibsNonVerdu} in the notation of \cite{verduAlpha} to obtain:
	\begin{equation}
	\begin{split}
	P_{XY}(E)\leq &\left(\esssup_{\Pm_Y} \Pm_X(E_Y)\right)^{1/\gamma}\cdot \\ &\mathbb{E}_{\Pm_Y}\left[\E^{1/\alpha}_{\Pm_X}\left[ \left(\frac{dP_{Y|X}}{d\Pm_Y}\right)^\alpha \bigg| Y \right]\right].
	\end{split}
	\end{equation}
\end{remark}
Moreover, for a fixed $\alpha$ due to the property that Holder's conjugates need to satisfy, we have that $\frac1\gamma=\frac{\alpha-1}{\alpha}$ and the bound in \eqref{sibMIBound} can also be rewritten as:
\begin{equation}
\Pm_{XY}(E)\leq \exp\left(\frac{\alpha-1}{\alpha}\left( I_{\alpha}(X;Y)+\log\esssup_{\Pm_Y} \Pm_X(E_Y)\right)\right) \label{genBoundSibs}. 
\end{equation}
Considering the right hand side of \eqref{genBoundSibs}, because of the non-decreasability of Sibson's $\alpha-$Mutual Information with respect to $\alpha$ \cite{verduAlpha} we have that, for $1\le \alpha_1\leq \alpha_2$:  \begin{equation}\frac{\alpha_1-1}{\alpha_1}I_{\alpha_1}(X;Y)\leq \frac{\alpha_2-1}{\alpha_2}I_{{\alpha}_2}(X;Y).\end{equation}  Thus, choosing a smaller $\alpha$ yields a better dependence on $I_\alpha(X;Y)$ in the bound, but given that $\frac1\gamma= \frac{\alpha-1}{\alpha}$ we also have that $\frac{1}{\gamma_1}\leq \frac{1}{\gamma_2}$ and being $\esssup_{\Pm_Y} \Pm_X(E_Y)\leq 1$ it implies that \begin{equation}\left(\esssup_{\Pm_Y} \Pm_X(E_Y)\right)^\frac{1}{\gamma_1}\geq\left(\esssup_{\Pm_Y} \Pm_X(E_Y)\right)^\frac{1}{\gamma_2}, \end{equation} with a worse dependence on $\left(\esssup_{\Pm_Y}\Pm(E_Y)\right)^\frac{1}{\gamma}$ on the bound. This leads to a trade-off between the two quantities. If we focus on Corollary \ref{sibsMIBoundCor}, letting $\alpha\to\infty$ we recover a result involving maximal leakage \cite{ISIT2019,ITW2019}, but extending it to general alphabets:
\begin{corollary}\label{adaptML}
	Let $(\X\times\Y,\F,\Pm_{XY}),(\X\times\Y,\F,\Pm_X\Pm_Y)$ be two probability spaces, and assume that $\Pm_{XY}\ll\Pm_X\Pm_Y$. Let $E\in\F$ we have that:
	\begin{equation}
	P_{XY}(E) \leq \left(\esssup_{\Pm_Y} \Pm_X(E_Y)\right) \exp\left(\ml{X}{Y}\right), \label{MLBound}
	\end{equation}
	where $\ml{X}{Y}$ is the maximal leakage~\cite{leakage}.
\end{corollary}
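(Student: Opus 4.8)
The plan is to obtain Corollary~\ref{adaptML} as the limiting case $\alpha\to\infty$ of Corollary~\ref{sibsMIBoundCor}. The convenient starting point is the reformulation~\eqref{genBoundSibs}, which holds for every $\alpha>1$ (equivalently for every Hölder pair with $\gamma=\alpha/(\alpha-1)$):
\[
\Pm_{XY}(E)\leq \exp\left(\frac{\alpha-1}{\alpha}\left( I_{\alpha}(X;Y)+\log\esssup_{\Pm_Y} \Pm_X(E_Y)\right)\right).
\]
Since the left-hand side does not depend on $\alpha$, the inequality is preserved in the limit, so it suffices to evaluate $\lim_{\alpha\to\infty}$ of the exponent on the right.

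First I would observe that $(\alpha-1)/\alpha\to 1$. Then, invoking the monotonicity of Sibson's $\alpha$-mutual information in $\alpha$ together with its convergence $I_\alpha(X;Y)\uparrow I_\infty(X;Y)$ as $\alpha\to\infty$ \cite{verduAlpha}, and using that $\log\esssup_{\Pm_Y}\Pm_X(E_Y)\le 0$, one gets
\[
\lim_{\alpha\to\infty}\frac{\alpha-1}{\alpha}\left( I_{\alpha}(X;Y)+\log\esssup_{\Pm_Y} \Pm_X(E_Y)\right) = I_\infty(X;Y)+\log\esssup_{\Pm_Y} \Pm_X(E_Y).
\]
By continuity of the exponential this yields $\Pm_{XY}(E)\le \esssup_{\Pm_Y}\Pm_X(E_Y)\cdot\exp\!\left(I_\infty(X;Y)\right)$. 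The argument is then closed by recalling the identity $I_\infty(X;Y)=\ml{X}{Y}$, i.e., that the Sibson mutual information of order infinity coincides with the maximal leakage \cite{leakage} — this is precisely the expression for $I_\infty(X;Y)$ displayed earlier in the paper. Substituting it gives exactly~\eqref{MLBound}.

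The only delicate point — the ``hard part'' — is making sure the limit interchange is legitimate, including in degenerate cases. If $I_\infty(X;Y)=+\infty$ the claimed bound is vacuously true, so one may assume $I_\infty(X;Y)<\infty$, whence $I_\alpha(X;Y)$ is finite and uniformly bounded in $\alpha$ and the elementary limits above apply verbatim. If $\esssup_{\Pm_Y}\Pm_X(E_Y)=0$ then $\Pm_X\Pm_Y(E)=\E_{\Pm_Y}[\Pm_X(E_Y)]=0$, and by the assumption $\Pm_{XY}\ll\Pm_X\Pm_Y$ also $\Pm_{XY}(E)=0$, so the inequality holds trivially; otherwise $\log\esssup_{\Pm_Y}\Pm_X(E_Y)$ is a finite nonpositive constant and the computation is unobstructed. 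Everything else is a direct substitution of known identities and standard continuity facts, so no further work is needed.
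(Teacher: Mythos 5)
Your proposal is correct and follows essentially the same route as the paper, which obtains Corollary~\ref{adaptML} by letting $\alpha\to\infty$ in Corollary~\ref{sibsMIBoundCor} (via the form~\eqref{genBoundSibs}) and invoking the identity $I_\infty(X;Y)=\ml{X}{Y}$. You merely fill in the limit-interchange and degenerate-case details that the paper leaves implicit, and those details are handled correctly.
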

The bound follows from the fact that $ \ml{X}{Y} = I_\infty(X;Y)$\cite{leakageLong}. 
A comparison between the bound for maximal leakage and some analogous result obtained for mutual information (through a different approach \cite{learningMI,infoThGenAn}) can be found in \cite{ITW2019}.
\section{Applications}
In this section, we consider some applications of the above bounds in the context of the generalization error.
In the bounds of interest $\Pm_X(E_y)$ is typically exponentially decaying with the number of samples and the trade-off between $\alpha$ and $\gamma$ can be explicitly seen in the sample complexity of a learning algorithm:
\begin{corollary}\label{generrSibs}
	Let $\X \times \Y$ be the sample space and $\mathcal{H}$ be the set of hypotheses.
	Let $\mathcal{A}:\mathcal{X}^n \times \mathcal{Y}^n\to \mathcal{H}$ be a learning algorithm that, given a sequence $S$ of $n$ points, returns a hypothesis $h\in \mathcal{H}$. Suppose $S$ is sampled i.i.d according to some distribution $\mathcal{P}$ over $\X \times \Y$, {\it i.e.,} $S\sim \mathcal{P}^n$. Let $\ell$ be the $0-1$ loss function as defined in \eqref{01loss}. Given $\eta \in (0,1)$, let $E=\{(S,h):|L_{\mathcal{P}}(h)-L_S(h)|>\eta \}$. Fix $\alpha\geq 1$. Then,
	\begin{equation}
	\mathbb{P}(E)\leq \exp\left(\frac{\alpha-1}{\alpha}\left(I_\alpha(S;\A(S))+\log2-2n\eta^2\right)\right).
	\end{equation}
\end{corollary}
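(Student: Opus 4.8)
The plan is to specialize the high-probability bound of Corollary~\ref{sibsMIBoundCor} --- in the rewritten form~\eqref{genBoundSibs} --- to the pair $(X,Y) = (S,\mathcal{A}(S))$, and then to control the remaining quantity $\esssup_{\Pm_Y}\Pm_X(E_Y)$ by a standard concentration inequality. Concretely, I would set $X = S$ and $Y := \mathcal{A}(S)$, so that $\Pm_{XY}$ is the joint law of sample and output hypothesis and $\Pm_X\Pm_Y = \mathcal{P}^n \otimes \Pm_{\mathcal{A}(S)}$ is the product of its marginals. With $E = \{(S,h) : |L_{\mathcal{P}}(h) - L_S(h)| > \eta\}$, the fiber $E_h = \{s : |L_{\mathcal{P}}(h) - L_s(h)| > \eta\}$ is precisely the ``bad sample'' event for the \emph{fixed} hypothesis $h$, which is the key point: decoupling $h$ from $S$ turns $E_h$ into an event about an i.i.d.\ average. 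If $\Pm_{XY}\not\ll\Pm_X\Pm_Y$ then $I_\alpha(S;\mathcal{A}(S)) = +\infty$ and the claim is vacuous, so one may assume the absolute-continuity hypothesis of Corollary~\ref{sibsMIBoundCor}.

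Next I would bound $\esssup_{\Pm_{\mathcal{A}(S)}}\Pm_S(E_{\mathcal{A}(S)})$ uniformly over $h$. For each fixed $h\in\mathcal{H}$, under $S\sim\mathcal{P}^n$ the empirical error $L_S(h) = \frac1n\sum_{i=1}^n \ell(h,z_i)$ is an average of $n$ i.i.d.\ random variables $\ell(h,z_i)\in\{0,1\}$ with common mean $L_{\mathcal{P}}(h)$; Hoeffding's inequality then yields $\Pm_S(E_h) = \Pm_S(|L_{\mathcal{P}}(h) - L_S(h)| > \eta) \le 2e^{-2n\eta^2}$. Since this bound is independent of $h$, it follows that $\esssup_{\Pm_{\mathcal{A}(S)}}\Pm_S(E_{\mathcal{A}(S)}) \le 2e^{-2n\eta^2}$.

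Finally, substituting this into~\eqref{genBoundSibs} and using that $\frac{\alpha-1}{\alpha}\ge 0$ for $\alpha\ge 1$, while $\log$ and $\exp$ are nondecreasing, so that enlarging the essential supremum to $2e^{-2n\eta^2}$ only enlarges the right-hand side, gives
\begin{equation*}
\mathbb{P}(E) \le \exp\!\left(\frac{\alpha-1}{\alpha}\Bigl(I_\alpha(S;\mathcal{A}(S)) + \log\bigl(2e^{-2n\eta^2}\bigr)\Bigr)\right) = \exp\!\left(\frac{\alpha-1}{\alpha}\bigl(I_\alpha(S;\mathcal{A}(S)) + \log 2 - 2n\eta^2\bigr)\right),
\end{equation*}
which is the asserted inequality. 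I do not expect any genuine obstacle here: the only care needed is (i) correctly identifying $X\leftrightarrow S$, $Y\leftrightarrow\mathcal{A}(S)$ so that the fiber is a per-hypothesis event, and (ii) observing that the Hoeffding bound is uniform in $h$ --- which is exactly what is required to dominate the essential supremum. Boundedness of the $0$--$1$ loss is what licenses Hoeffding; a general loss valued in an interval of length $L$ would give the same bound with $2n\eta^2$ replaced by $2n\eta^2/L^2$.
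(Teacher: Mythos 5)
Your proposal is correct and follows essentially the same route as the paper: fix the hypothesis, bound $\Pm_S(E_h)\le 2e^{-2n\eta^2}$ uniformly in $h$, and plug this into Corollary~\ref{sibsMIBoundCor}. The only (immaterial) difference is that you invoke Hoeffding's inequality where the paper invokes McDiarmid's via the bounded-difference property $|L_S(h)-L_{\hat S}(h)|\le 1/n$; for an i.i.d.\ average of $\{0,1\}$-valued losses these give the identical constant, and your explicit handling of the $\Pm_{XY}\not\ll\Pm_X\Pm_Y$ case is a small bonus.
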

\begin{proof}
	Fix $\eta\in (0,1)$ and $\alpha\geq 1$. Let $\frac{1}{\gamma}= \frac{\alpha-1}{\alpha}$. Let us denote with $E_h$ the fiber of $E$ over $h$ for some $h\in\mathcal{H}$, {\it i.e.,} $E_h=\{S : |L_\Pm(h)-L_S(h)|>\eta\}$. Consider $S,\hat{S}\in\{\X\times\Y\}^n$, where $S=((x_1,y_1),\ldots,(x_n,y_n))$ and $\hat{S}=((\hat{x}_1,\hat{y}_1),\ldots,(\hat{x}_n,\hat{y}_n))$. If $S,\hat{S}$ differ only in one position $j$, {\it i.e.,} $(x_i,y_i)=(\hat{x}_i,\hat{y}_i)\,\forall i\in[n]\setminus\{j\}$ and $(x_j,y_j)\neq (\hat{x}_j,\hat{y}_j)$ we have that for every $h\in\mathcal{H}$,
	\begin{equation}
	|L_S(h)-L_{\hat{S}}(h)|
	\leq \frac{1}{n}. \label{sensitLoss}
	\end{equation}
	By McDiarmid's inequality \cite{BLM2013Concentration}[Sec. 1.1] and Ineq. \eqref{sensitLoss} we have that for every hypothesis $h\in\mathcal{H}$, \begin{equation} \mathcal{P}_S(E_h) \leq 2\cdot \exp(-2n\eta^2).\label{mcDiarmids1} \end{equation} Then it follows from Corollary~\ref{sibsMIBoundCor} and Ineq. \eqref{mcDiarmids1} that:
	\begin{align}
	\mathbb{P}(E) &\leq \exp\left(\frac{\alpha-1}{\alpha}I_\alpha(S; \mathcal{A}(S))\right) (2\exp(-2n\eta^2))^\frac{\alpha-1}{\alpha}.
	\end{align}
\end{proof}
\begin{corollary}
	Let $\X \times \Y$ be the sample space and $\mathcal{H}$ be the set of hypotheses.
	Let $\mathcal{A}:\mathcal{X}^n \times \mathcal{Y}^n\to \mathcal{H}$ be a learning algorithm that, given a sequence $S$ of $n$ points, returns a hypothesis $h\in \mathcal{H}$. Suppose $S$ is sampled i.i.d according to some distribution $\mathcal{P}$ over $\X \times \Y$, {\it i.e.,} $S\sim \mathcal{P}^n$. Let $\ell$ be the $0-1$ loss function.  Given $\eta \in (0,1)$, let $E=\{(S,h):|L_{\mathcal{P}}(h)-L_S(h)|>\eta \}$. Fix $\alpha\geq 1$ then, in order to ensure a confidence of $\delta\in(0,1)$, {\it i.e.,} $\mathbb{P}(E)\leq \delta$, we need a number of samples $m$ satisfying:
	\begin{equation}\label{sampleComplexitySibsMI}
	m\geq \frac{I_\alpha(S;\A(S))+\log2+\gamma\log\left(\frac{1}{\delta}\right)}{2\eta^2}.
	\end{equation}
\end{corollary}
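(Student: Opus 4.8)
The plan is to invoke Corollary~\ref{generrSibs} directly and then solve the resulting exponential inequality for the sample size. By Corollary~\ref{generrSibs}, for the event $E=\{(S,h):|L_{\mathcal{P}}(h)-L_S(h)|>\eta\}$, any fixed $\alpha\geq 1$, and a sample of size $m=n$,
\begin{equation}
\mathbb{P}(E)\leq \exp\left(\frac{\alpha-1}{\alpha}\left(I_\alpha(S;\A(S))+\log 2-2m\eta^2\right)\right).
\end{equation}
Hence, requiring the right-hand side to be at most $\delta$ is a \emph{sufficient} condition for $\mathbb{P}(E)\leq\delta$, which is exactly what is needed to upper-bound the sample complexity.

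First I would take logarithms of both sides of $\exp\left(\frac{\alpha-1}{\alpha}\left(I_\alpha(S;\A(S))+\log 2-2m\eta^2\right)\right)\leq\delta$, obtaining
\begin{equation}
\frac{\alpha-1}{\alpha}\left(I_\alpha(S;\A(S))+\log 2-2m\eta^2\right)\leq \log\delta=-\log\tfrac{1}{\delta}.
\end{equation}
Next, recalling that the Holder conjugates in Corollary~\ref{generrSibs} satisfy $\tfrac1\gamma=\tfrac{\alpha-1}{\alpha}$, so that $\gamma=\tfrac{\alpha}{\alpha-1}\geq 0$ for all $\alpha\geq 1$, I would multiply through by $\gamma$ (which preserves the direction of the inequality) to get $I_\alpha(S;\A(S))+\log 2-2m\eta^2\leq -\gamma\log\tfrac1\delta$, and then isolate $m$ and divide by $2\eta^2>0$, yielding
\begin{equation}
m\geq \frac{I_\alpha(S;\A(S))+\log 2+\gamma\log\left(\tfrac{1}{\delta}\right)}{2\eta^2}.
\end{equation}
Since the sample complexity is the smallest $m$ for which some algorithm guarantees $\mathbb{P}(E)\leq\delta$, any $m$ meeting this inequality suffices, establishing the claim.

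There is essentially no hard step here; the statement is a rearrangement of Corollary~\ref{generrSibs}. The only points needing care are (i) the monotonicity of $\log$ together with $\tfrac{\alpha-1}{\alpha}\geq 0$ for $\alpha\geq 1$, so that none of the manipulations flip the inequality, and (ii) the boundary case $\alpha=1$, where $\tfrac{\alpha-1}{\alpha}=0$, $\gamma=+\infty$, the bound of Corollary~\ref{generrSibs} degenerates to the trivial $\mathbb{P}(E)\leq 1$, and the right-hand side above is $+\infty$, so the statement remains (vacuously) consistent. One may also note, as in the discussion after Corollary~\ref{sibsMIBoundCor}, the trade-off in $\alpha$: decreasing $\alpha$ shrinks $I_\alpha(S;\A(S))$ but inflates the factor $\gamma$ multiplying $\log(1/\delta)$.
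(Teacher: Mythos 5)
Your proposal is correct and follows exactly the paper's own argument: apply Corollary~\ref{generrSibs}, require the exponential bound to be at most $\delta$, and solve for the sample size using $\tfrac1\gamma=\tfrac{\alpha-1}{\alpha}$. The extra care you take with the sign of $\tfrac{\alpha-1}{\alpha}$ and the degenerate case $\alpha=1$ is a harmless elaboration of the same computation (minor nit: $\gamma\geq 1$, not merely $\gamma\geq 0$, but positivity is all that is used).
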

\begin{proof}
	From Corollary \ref{generrSibs} we have that $$\mathbb{P}(E)\leq \exp\left(\frac{\alpha-1}{\alpha}\left(I_\alpha(S;\A(S))+\log2-2n\eta^2\right)\right).$$ Fix $\delta \in (0,1)$, our aim is to have that:
	\begin{equation}
	\exp\left(\frac{\alpha-1}{\alpha}\left(I_\alpha(S;\A(S))+\log2-2n\eta^2\right)\right)\leq \delta,
	\end{equation}
	solving the inequality wrt $n$ gives us Equation \eqref{sampleComplexitySibsMI}. 
\end{proof}
Smaller $\alpha$ means that $I_\alpha(S;\A(S))$ will be smaller, but it will imply a larger value for $\gamma =\frac{\alpha}{\alpha-1}$ and thus a worse dependency on $\log(1/\delta)$ in the sample complexity.
Let $\Z$ be the sample space and $\mathcal{H}$ be the set of hypotheses. An immediate generalization of Corollary \ref{generrSibs} follows by considering loss functions such that for every fixed $h\in\mathcal{H},$ the random variable $l(h,Z)$ (induced by $Z$) is $\sigma^2-$sub Gaussian\footnote{Given a random variable $X$ we say that it is $\sigma^2$-sub-Gaussian if for every $\lambda\in\mathbb{R}$:
	$\mathbb{E}[e^{\lambda X}] \leq e^{\frac{\lambda^2\sigma^2}{2}}$.} for some $\sigma>0$.
\begin{corollary}\label{generrSibs2}
	Let $\mathcal{A}:\Z^n \to \mathcal{H}$ be a learning algorithm that, given a sequence $S$ of $n$ points, returns a hypothesis $h\in \mathcal{H}$. Suppose $S$ is sampled i.i.d according to some distribution $\mathcal{P}$ over $\Z$. Let $\ell:\mathcal{H}\times\Z 
	\to \mathbb{R}$ be a loss function such that $\ell(h,Z)$ is $\sigma$-sub Gaussian random variable for every $h\in\mathcal{H}$.  
	Given $\eta \in (0,1)$, let $E=\{(S,h):|L_{\mathcal{P}}(h)-L_S(h)|>\eta \}$. Fix $\alpha\geq 1$.  Then,
	\begin{align}\mathbb{P}(E) \leq \exp\left(\frac{1}{\gamma}\left(I_\alpha(S;\A(S))+\log2 -n\frac{\eta^2}{2\sigma^2}\right)\right).\end{align}
\end{corollary}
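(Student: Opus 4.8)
The plan is to reproduce the structure of the proof of Corollary~\ref{generrSibs} essentially verbatim, with the single change that McDiarmid's inequality --- which there exploited the boundedness of the $0-1$ loss --- is replaced by a Chernoff bound adapted to the sub-Gaussian hypothesis. As before, identify $X=S$ and $Y=\A(S)$ (so that $\Pm_{XY}=\Pm_{S,\A(S)}$ and $\Pm_X\Pm_Y=\Pm_S\Pm_{\A(S)}$, which we assume satisfy the absolute-continuity hypothesis of Corollary~\ref{sibsMIBoundCor}), and for each $h\in\mathcal{H}$ write $E_h=\{S:|L_{\mathcal{P}}(h)-L_S(h)|>\eta\}$ for the fiber of $E$ over $h$.

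The first --- and essentially only --- step is to bound $\Pm_S(E_h)$ uniformly in $h$. Fixing $h\in\mathcal{H}$ and writing $S=(Z_1,\dots,Z_n)$ with the $Z_i$ i.i.d.\ $\sim\mathcal{P}$, the centered empirical loss
\[
L_S(h)-L_{\mathcal{P}}(h)=\frac1n\sum_{i=1}^n\bigl(\ell(h,Z_i)-L_{\mathcal{P}}(h)\bigr)
\]
is an average of $n$ independent, zero-mean, $\sigma$-sub-Gaussian summands, hence $(\sigma/\sqrt n)$-sub-Gaussian by the usual tensorization of the moment generating function. Applying the Chernoff bound to each of the two tails and summing gives
\begin{equation}
\Pm_S(E_h)\le 2\exp\!\left(-\frac{n\eta^2}{2\sigma^2}\right),
\end{equation}
and, since this estimate does not depend on $h$, it also bounds $\esssup_{\Pm_Y}\Pm_X(E_Y)$.

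It then remains only to substitute this into Corollary~\ref{sibsMIBoundCor}, say in the form~\eqref{sibMIBound} (equivalently~\eqref{genBoundSibs}), and to use the H\"older-conjugacy relation $\tfrac1\gamma=\tfrac{\alpha-1}{\alpha}$:
\begin{equation}
\mathbb{P}(E)\le\Bigl(2\exp\!\bigl(-\tfrac{n\eta^2}{2\sigma^2}\bigr)\Bigr)^{1/\gamma}\exp\!\left(\tfrac1\gamma\,I_\alpha(S;\A(S))\right),
\end{equation}
and collecting the three exponents into one yields the claimed inequality.

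I do not expect a real obstacle here. The only points that need a little care are the sub-Gaussian bookkeeping --- that an average of $n$ i.i.d.\ random variables with variance proxy $\sigma^2$ has variance proxy $\sigma^2/n$, and that the resulting tail bound is genuinely uniform over $\mathcal{H}$, which is precisely what legitimizes passing to the $\esssup$ in Corollary~\ref{sibsMIBoundCor} --- together with the benign subtlety that the sub-Gaussian assumption should be read as applying to the centered loss $\ell(h,Z)-L_{\mathcal{P}}(h)$. As in Corollary~\ref{generrSibs}, the bound is informative only once $n$ is large enough that $I_\alpha(S;\A(S))+\log 2<n\eta^2/(2\sigma^2)$.
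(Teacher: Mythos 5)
Your proposal is correct and follows the paper's own argument essentially verbatim: the paper likewise bounds each fiber $E_h$ by $2\exp(-n\eta^2/(2\sigma^2))$ via Hoeffding's inequality for sub-Gaussian variables (which is exactly the tensorized Chernoff bound you spell out) and then substitutes into Corollary~\ref{sibsMIBoundCor}. The only difference is that you make the sub-Gaussian bookkeeping explicit where the paper cites Hoeffding directly.
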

\begin{proof}
	Fix $\eta\in(0,1)$. Let us denote with $E_h$ the fiber of $E$ over $h$ for some $h\in\mathcal{H}$, {\it i.e.,} $E_h=\{S : |L_\Pm(h)-L_S(h)|>\eta\}$.  By assumption we have that $l(h,Z)$ is $\sigma-$sub Gaussian for every $h$. We can thus use Hoeffding's inequality for every hypothesis $h\in\mathcal{H},$ and retrieve that for every $h\in\mathcal{H}:$ \begin{equation}
	\mathcal{P}_S(E_h) \leq 2\cdot \exp\left(-n\frac{\eta^2}{2\sigma^2}\right). \label{hoeffdings} \end{equation}
	Then it follows from Corollary~\ref{sibsMIBoundCor} and Ineq. \eqref{hoeffdings} that:
	\begin{align}
	\mathbb{P}(E) &\leq \exp\left(\frac{\alpha-1}{\alpha}I_\alpha(S; \mathcal{A}(S))\right) \left(2\exp\left(-n\frac{\eta^2}{2\sigma^2}\right)\right)^\frac{\alpha-1}{\alpha}.
	\end{align}
\end{proof}
One important characteristic of these bounds is that they involve information-measures satisfying the data processing inequality~\cite{verduAlpha}. This means that all these results about generalization are \textbf{robust} to post-processing, {\it i.e.,} if the outcome of any learning algorithm with bounded $I_\alpha$ is processed further, the value of the information measure cannot increase. 
Another desirable property that would render the usage of such measures appealing in Adaptive Data Analysis is the Adaptive Composition property \cite{genAdap}. Alas, the lack of a definition of conditional Sibson's MI does not allows us, for the moment, to fully address the issue and verify whether or not the measure composes adaptively (like Mutual Information and Maximal Leakage \cite{infoThGenAn,ITW2019}).
Moreover, a comparison between this and other well-known results in the literature can be found in Table \ref{comparison}.  One can immediately see that the Sibson's MI bound and, in particular, the Maximal Leakage one, are the ones that most resemble the VC-Dimension bound both in terms of excess probability decay and sample complexity. 
\begin{table*}
	\begin{center}
		\caption{Comparison between bounds}
		\label{comparison}
		\begin{tabular}{c c c c c} 
			\toprule
			& Robust & Adaptive & Bound & Sample Complexity \\
			\midrule
			$\beta-$Stability \cite{bousqet}
			& No & No & exp. decay in $n$ & $f(\beta,\eta)\times\log\left(\frac{2}{\delta}\right)$ \\
			\addlinespace
			$\epsilon$-DP \cite{genAdap}
			& Yes & Yes &$\frac{1}{4} \exp{\left(\frac{-n\eta^2}{12}\right)}$, $\epsilon\leq \eta/2$ & $ \frac{12\cdot\log(1/4\delta)}{\eta^2}$\\
			\addlinespace
			MI \cite{learningMI}
			& Yes & Yes &$(I(X;Y) +1)/(2n\eta^2 -1)$ & $I(X;Y)/\eta^2\delta$ \\
			\addlinespace
			Maximal Leakage \cite{ITW2019}
			& Yes & Yes & $2\cdot\exp(\mathcal{L}(X\to Y)-2n\eta^2)$ & $(\ml{X}{Y} + \log\left(\frac{2}{\delta}\right))/2\eta^2$ \\
			\addlinespace
			$\alpha$-Sibson's MI 
			& Yes & Unknown & $\exp(\frac{\alpha-1}{\alpha}(I_\alpha(S;\A(S))+\log2-2n\eta^2))$ & $(I_\alpha(X;Y)+\log2+\gamma\log\left(\frac{1}{\delta}\right))/2\eta^2$ \\
			\addlinespace
			VC-Dim. $K$ \cite{learningBook}
			& & & $2\cdot\exp(\log(K)-2n\eta^2) $& 	$(\log(K)+\log\left(\frac{2}{\delta}\right))/2\eta^2$ \\
			\bottomrule
		\end{tabular}
		
	\end{center}
\end{table*}
\section{Bounds on Expected Generalization Error}
So far, when analyzing the generalization error, we have only considered high probability bounds, what can these results tell us about the \textbf{expected} generalization error? In order to provide a meaningful bound, some assumptions on the quantity $\max_h\Pm_{S}(|L_S(h)-\mathbb{E}[L(h)]|>\eta)$ are needed (where $S$ is a random vector of length $n$, sampled in an iid fashion from some distribution $\mathcal{D})$. More precisely, we will assume this probability to be exponentially decreasing with the number of samples $n$, as it often happens in the literature \cite{bousqet,BLM2013Concentration}. The following result is inspired by \cite[p. 419]{learningBook} with a slightly different proof.
\begin{lemma}\label{lemmaBoundExp}
	Let $X$ be a random variable and let $\hat{x}\in\mathbb{R}$. Suppose that exist $a\geq 0$ and $b\geq e$ such that for every $\eta>0$ $\Pm_X(|X-\hat{x}|\geq \eta)\leq 2b\exp\left(-\eta^2/a^2\right)$ then $\mathbb{E}\left[|X-\hat{x}|\right]\leq a\left(\sqrt{\log 2b}+\frac{1}{2\sqrt{\log 2b}}\right)$.
\end{lemma}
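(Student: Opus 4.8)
The plan is to bound the expectation by integrating the tail probability and then optimizing the split point. Recall the standard identity $\mathbb{E}[|X-\hat{x}|] = \int_0^\infty \Pm_X(|X-\hat{x}|\geq \eta)\,d\eta$. The tail bound $\Pm_X(|X-\hat{x}|\geq \eta)\leq 2b\exp(-\eta^2/a^2)$ is only useful when the right-hand side is less than $1$; for small $\eta$ we should instead bound the probability trivially by $1$. So first I would pick a threshold $\eta_0 > 0$, split the integral as $\int_0^{\eta_0} 1\,d\eta + \int_{\eta_0}^\infty 2b\exp(-\eta^2/a^2)\,d\eta$, giving $\mathbb{E}[|X-\hat{x}|]\leq \eta_0 + 2b\int_{\eta_0}^\infty \exp(-\eta^2/a^2)\,d\eta$.

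Next I would estimate the Gaussian-type tail integral. The cleanest elementary bound is $\int_{\eta_0}^\infty \exp(-\eta^2/a^2)\,d\eta \leq \int_{\eta_0}^\infty \frac{\eta}{\eta_0}\exp(-\eta^2/a^2)\,d\eta = \frac{a^2}{2\eta_0}\exp(-\eta_0^2/a^2)$, valid since $\eta/\eta_0 \geq 1$ on the range of integration. This yields
\begin{equation}
\mathbb{E}[|X-\hat{x}|] \leq \eta_0 + \frac{a^2 b}{\eta_0}\exp(-\eta_0^2/a^2). \notag
\end{equation}
The natural choice is $\eta_0 = a\sqrt{\log 2b}$, which is well-defined and positive because $b \geq e > 1/2$ ensures $\log 2b > 0$. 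With this choice $\exp(-\eta_0^2/a^2) = \exp(-\log 2b) = \frac{1}{2b}$, so the second term becomes $\frac{a^2 b}{a\sqrt{\log 2b}}\cdot\frac{1}{2b} = \frac{a}{2\sqrt{\log 2b}}$, and the first term is $a\sqrt{\log 2b}$. Adding them gives exactly $a\left(\sqrt{\log 2b} + \frac{1}{2\sqrt{\log 2b}}\right)$, as claimed.

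The only genuine subtlety — the part I would be most careful about — is justifying that the trivial bound on $[0,\eta_0]$ is not wasteful enough to break the target constant, and more importantly checking that $\eta_0 = a\sqrt{\log 2b}$ really is the optimal (or at least a good enough) choice; minimizing $\eta_0 + (a^2 b/\eta_0)e^{-\eta_0^2/a^2}$ in closed form is messy, so I would simply verify by substitution that this choice reproduces the stated bound rather than deriving it from a stationarity condition. I would also note the edge cases: if $a = 0$ the hypothesis forces $X = \hat x$ almost surely and the bound holds trivially (read as $0 \leq 0$), and the condition $b \geq e$ guarantees $\sqrt{\log 2b} \geq \sqrt{\log 2e} > 0$ so no division by zero occurs. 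Everything else is routine calculus.
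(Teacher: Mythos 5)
Your proof is correct and follows essentially the same route as the paper: integrate the tail probability, bound it by $\min\left(1,2b\exp\left(-\eta^2/a^2\right)\right)$, split at $\eta_0=a\sqrt{\log 2b}$, and control the Gaussian-type tail via the standard $\eta/\eta_0\geq 1$ trick. You merely make explicit the tail estimate and the substitution check that the paper leaves implicit in its final inequality.
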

\begin{proof}
	\begin{align}
	\mathbb{E}\left[|X-\hat{x}|\right]&=\int_0^{+\infty} \Pm_X(|X-\hat{x}|\geq \eta)d\eta \\
	&\leq \int_0^{+\infty} \min\left(1,2b\exp\left(-\eta^2/a^2\right)\right) d\eta \label{multFactor} \\ 
	&= \int_0^{\sqrt{a^2\log 2b}} d\eta + \int_{\sqrt{a^2\log 2b}}^{+\infty} 2b\exp(-\frac{\eta^2}{a^2})d\eta  \\
	&\leq a\left(\sqrt{\log 2b}+ \frac{1}{2\sqrt{\log 2b}}\right).
	\end{align}
\end{proof}
\begin{theorem}\label{boundExp}
	Let $\mathcal{A}:\mathcal{Z}^n\to \mathcal{H}$ be a learning algorithm and let $I_\alpha(S;\mathcal{A}(S))$ be the dependence measure chosen. Suppose that the loss function $l:\mathcal{Z}\times\mathcal{H}\to \mathbb{R}$ is such that $\forall h \Pm_{S\sim \mathcal{D}^n}(|L_S(h)-\mathbb{E}[L(h)]|>\eta)\leq 2\exp\left(-\frac{\eta^2}{2\sigma^2}n\right) $ for some $\sigma>0$ (e.g. $l(h,Z)$ is $\sigma^2$-sub-Gaussian), then:
	\begin{align}
	&\mathbb{E}\left[|L_S(H)-\mathbb{E}[L(H)]|\right] \leq\\ &\sqrt{\frac{2\sigma^2 \gamma}{n}}\left(\sqrt{\frac{\log(2)+I_\alpha(S;\mathcal{A}(S))}{\gamma}}+  \frac{1}{2\sqrt{\frac{\log2+I_\alpha(S;\mathcal{A}(S))}{\gamma}}}\right).
	\end{align}
\end{theorem}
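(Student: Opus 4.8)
The plan is to turn the per-hypothesis tail bound assumed on the loss into a tail bound on the genuine event $\{|L_S(H)-\mathbb{E}[L(H)]|>\eta\}$ by means of Corollary~\ref{sibsMIBoundCor}, and then to integrate that tail using Lemma~\ref{lemmaBoundExp} — essentially the argument in the proof of Corollary~\ref{generrSibs2}, carried one step further. First I would fix $\eta>0$ and set $E=E(\eta):=\{(S,h):|L_S(h)-\mathbb{E}[L(h)]|>\eta\}$, whose fiber over $h\in\mathcal{H}$ is $E_h=\{S:|L_S(h)-\mathbb{E}[L(h)]|>\eta\}$. The hypothesis gives $\Pm_S(E_h)\le 2\exp(-\eta^2 n/2\sigma^2)$ for every $h$, hence $\esssup_{\Pm_H}\Pm_S(E_H)\le 2\exp(-\eta^2 n/2\sigma^2)$. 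Feeding this into Corollary~\ref{sibsMIBoundCor} (in the form~\eqref{sibMIBound}) with $X=S$ and $Y=\mathcal{A}(S)$, and using $1/\gamma=(\alpha-1)/\alpha$, yields after combining exponents
\[
\Pm\big(|L_S(H)-\mathbb{E}[L(H)]|>\eta\big)\;\le\;\exp\!\Big(\tfrac{\log 2+I_\alpha(S;\mathcal{A}(S))}{\gamma}-\tfrac{n\eta^2}{2\sigma^2\gamma}\Big).
\]

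Second, I would read off the right-hand side as a bound of the form $2b\exp(-\eta^2/a^2)$ appearing in Lemma~\ref{lemmaBoundExp}, with $a=\sqrt{2\sigma^2\gamma/n}$ and $\log 2b=(\log 2+I_\alpha(S;\mathcal{A}(S)))/\gamma$. Applying Lemma~\ref{lemmaBoundExp} to the nonnegative random variable $X:=|L_S(H)-\mathbb{E}[L(H)]|$ with $\hat x:=0$ — so that $|X-\hat x|=X$, and the displayed inequality (after the harmless passage from ``$>\eta$'' to ``$\ge\eta$'' via $\eta'\uparrow\eta$) is precisely its hypothesis — gives $\mathbb{E}[X]\le a\big(\sqrt{\log 2b}+\tfrac{1}{2\sqrt{\log 2b}}\big)$, and substituting the values of $a$ and $\log 2b$ is exactly the asserted bound.

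The one point that needs care is that Lemma~\ref{lemmaBoundExp} is stated with the hypothesis $b\ge e$, whereas here one only gets $b>1/2$ (equivalently $\log 2b>0$) for free; since the integral split in that lemma's proof really only uses $\log 2b>0$ so that $1/(2\sqrt{\log 2b})$ is well defined, I would either note this or restrict attention to the regime $\log 2+I_\alpha(S;\mathcal{A}(S))\ge\gamma(1+\log 2)$. A second, minor, point worth stating explicitly is that $\mathbb{E}[L(H)]=L_{\mathcal{P}}(H)$ itself depends on $S$ through $H=\mathcal{A}(S)$, so it is cleanest to treat $|L_S(H)-\mathbb{E}[L(H)]|$ as a single scalar random variable and centre Lemma~\ref{lemmaBoundExp} at $0$ rather than at a random mean; everything else is bookkeeping with $1/\gamma=(\alpha-1)/\alpha$.
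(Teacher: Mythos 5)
Your proof is correct and follows essentially the same route as the paper: the paper likewise feeds the tail bound of Corollary~\ref{generrSibs2} into Lemma~\ref{lemmaBoundExp} with $a=\sqrt{2\gamma\sigma^2/n}$ and $b=2^{1/\gamma-1}\exp\left(I_\alpha(S;\mathcal{A}(S))/\gamma\right)$, which is exactly your identification $\log 2b=(\log 2+I_\alpha(S;\mathcal{A}(S)))/\gamma$. Your remark that the lemma's stated hypothesis $b\ge e$ is not automatically satisfied here (only $\log 2b>0$ comes for free) is a genuine subtlety the paper glosses over, and your resolution — that the lemma's proof only uses $\log 2b>0$ — is the right one.
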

\begin{proof}
	The proof is a simple application of Lemma \ref{lemmaBoundExp} and Corollary \ref{generrSibs2} with $a=\sqrt{2\gamma\sigma^2}/\sqrt{n}$ and with $b=2^{\frac1\gamma -1}\exp\left(\frac{I_\alpha(\mathcal{A}(S);S)}{\gamma}\right)$.
\end{proof}
An interesting application of Theorem \ref{boundExp} can be found by considering $\ml{S}{\A(S)}$ and the $0-1$ loss (hence, $1/4$-sub-Gaussian). 
\begin{corollary}
	Let $\mathcal{A}:\mathcal{Z}^n\to \mathcal{H}$. Consider the $0-1$ loss, then $\forall h \Pm_{S\sim \mathcal{D}^n}(|L_S(h)-\mathbb{E}[L(h)]|>\eta)\leq 2\exp\left(-2\eta^2n\right)$, and:
	\begin{align}
	&\mathbb{E}\left[|L_S(H)-\mathbb{E}[L(H)]|\right] \leq \\ &\frac{1}{\sqrt{2n}}\left(\sqrt{\log2+\ml{S}{\A(S)}}+ \frac{1}{2\sqrt{\log2+ \ml{S}{\A(S)}}}\right).
	\end{align}
\end{corollary}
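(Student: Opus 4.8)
The plan is to derive this corollary as the specialization of Theorem~\ref{boundExp} to maximal leakage (the $\alpha\to\infty$ regime, hence Holder exponent $\gamma=1$) together with the $0-1$ loss, which takes values in $[0,1]$ and is therefore $1/4$-sub-Gaussian. I would carry this out in three short steps.

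First, I would verify the hypothesis of Theorem~\ref{boundExp} for the $0-1$ loss. For every fixed $h\in\mathcal H$ we have $\ell(h,Z)\in[0,1]$, so $\ell(h,Z)$ is $1/4$-sub-Gaussian by Hoeffding's lemma; equivalently, $L_S(h)=\frac1n\sum_{i=1}^n\ell(h,z_i)$ is an average of $n$ i.i.d.\ $[0,1]$-valued random variables with mean $\E[L(h)]$, and Hoeffding's inequality yields, for every $h$ and every $\eta>0$, $\Pm_{S\sim\mathcal D^n}(|L_S(h)-\E[L(h)]|>\eta)\le 2\exp(-2n\eta^2)$. This is exactly the stated tail bound, and it matches the hypothesis of Theorem~\ref{boundExp} with $\sigma^2=1/4$, since then $-\eta^2 n/(2\sigma^2)=-2n\eta^2$.

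Second, I would apply Theorem~\ref{boundExp} with $\sigma^2=1/4$ and $\alpha\to\infty$, so that $\gamma=\alpha/(\alpha-1)\to 1$ and $I_\alpha(S;\A(S))\to I_\infty(S;\A(S))=\ml{S}{\A(S)}$, the last equality being the identity already invoked for Corollary~\ref{adaptML} (see \cite{leakageLong}). Passing to $\alpha=\infty$ needs no separate limiting argument: the chain underlying Theorem~\ref{boundExp} — namely Corollary~\ref{adaptML} $\Rightarrow$ Corollary~\ref{generrSibs2} specialized to $\gamma=1$ $\Rightarrow$ Lemma~\ref{lemmaBoundExp} — is valid verbatim at $\alpha=\infty$, since $\frac{\alpha-1}{\alpha}=1$ there and Corollary~\ref{adaptML} is precisely the $\alpha=\infty$ instance of Corollary~\ref{sibsMIBoundCor}. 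One only needs to observe that the hypothesis of Lemma~\ref{lemmaBoundExp} is met with $a=\sqrt{2\sigma^2\gamma/n}=1/\sqrt{2n}$ and $2b=2\exp(\ml{S}{\A(S)})\ge 2$, so that $\log 2b=\log 2+\ml{S}{\A(S)}>0$.

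Third, I would simplify the constants in the conclusion of Theorem~\ref{boundExp}: $\sqrt{2\sigma^2\gamma/n}=\sqrt{(1/2)/n}=1/\sqrt{2n}$ and $(\log 2+I_\alpha(S;\A(S)))/\gamma=\log 2+\ml{S}{\A(S)}$, which turns the displayed bound into exactly the claimed inequality. I expect essentially no obstacle here: all the analytic content lives in Theorem~\ref{boundExp} (and ultimately in Lemma~\ref{lemmaBoundExp}), so the only points that deserve a line of justification are that the $0-1$ loss is $1/4$-sub-Gaussian, that the per-hypothesis tail then decays like $2\exp(-2n\eta^2)$, and that the substitution $\alpha\to\infty$ — hence $\gamma=1$ and $I_\infty(S;\A(S))=\ml{S}{\A(S)}$ — is legitimate rather than a limit requiring a dedicated argument.
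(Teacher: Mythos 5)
Your proof is correct and takes exactly the route the paper intends: the paper gives no separate argument for this corollary, merely noting that it is Theorem~\ref{boundExp} applied with the $0$-$1$ loss (hence $\sigma^2=1/4$) and $\alpha\to\infty$ (hence $\gamma=1$ and $I_\infty(S;\A(S))=\ml{S}{\A(S)}$), which is precisely your three steps. Your constant bookkeeping ($\sqrt{2\sigma^2\gamma/n}=1/\sqrt{2n}$, $(\log 2+I_\alpha)/\gamma=\log 2+\ml{S}{\A(S)}$) matches the stated bound.
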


\bibliographystyle{IEEEtran}
\bibliography{sample}

\end{document}